\newtheorem{theorem}{Theorem}
\def\cH{\mathcal{H}}
\def\err{\mathrm{err}}
\def\errhot{\mathrm{errhot}}
 \date{}
	\title{Count-min sketch with variable number of hash functions: an experimental study}
	\author[1]{\'Eric Fusy}
\author[1]{Gregory Kucherov}
\affil[1]{LIGM, CNRS, Univ. Gustave Eiffel, Marne-la-Vall\'ee, France \texttt{\{Eric.Fusy|Gregory.Kucherov\}@univ-eiffel.fr}}
\begin{document}

	\maketitle
	
	\begin{abstract}
		Conservative Count-Min, a stronger version  of the popular Count-Min sketch [Cormode, Muthukrishnan 2005], 
		is an online-maintained hashing-based sketch summarizing element frequency information of a stream. Although several works attempted to analyze the error of conservative Count-Min, its behavior remains poorly understood. In [Fusy, Kucherov 2022], 
		we demonstrated that under the uniform distribution of input elements, the error of conservative Count-Min follows two distinct regimes depending on its load factor. 
		
		In this work, we present a series of results providing new insights into the behavior of conservative Count-Min. 
		Our contribution is twofold. On one hand, we provide a detailed experimental analysis of Count-Min sketch in different regimes and under several representative probability distributions of input elements. On the other hand, we demonstrate improvements that can be made by assigning a variable number of hash functions to different elements. This includes, in particular, reduced space of the data structure while still supporting a small error. 
	\end{abstract}
	

	\section{Introduction}
In most general terms, 
	\textit{Count-Min sketch} is a data structure for representing an associative array of numbers indexed by elements (keys) drawn from a large universe, where the array is provided through a stream of (key,value) updates so that the current value associated to a key is the sum of all previous updates of this key. Perhaps the most common setting for applying Count-Min, that we focus on in this paper, is the \textit{counting} setting where all update values are $+1$. In this case, the value of a key is its \textit{count} telling how many times this key has appeared in the stream. In other words, Count-Min can be seen as representing a \textit{multiset}, that is a mapping of a subset of keys to non-negative integers. With this latter interpretation in mind, each update will be called \textit{insertion}. The main supported query of Count-Min is retrieving the count of a given key, and the returned estimate may not be exact, but can only overestimate the true count. 
	
	The counting version of Count-Min is applied to different practical problems related to data stream mining and data summarization. One example is tracking frequent items (\textit{heavy hitters}) in streams  \cite{liu2011methods,charikar2004finding,cormode2008finding}. It occurs in network traffic monitoring \cite{DBLP:conf/sigcomm/EstanV02}, optimization of cache usage  \cite{DBLP:conf/iccnc/EinzigerF15}. It also occurs in non-streaming big data applications, e.g. in bioinformatics \cite{mohamadi2017ntcard,behera2018kmerestimate,shibuya_set-min_2020}. 
	
	Count-Min relies on hash functions but, unlike classic hash tables, does not store elements but only count information (hence the term \textit{sketch}). It was proposed in \cite{cormode_improved_2005}, however a very similar data structure was proposed earlier in \cite{DBLP:conf/sigmod/CohenM03} under the name \textit{Spectral Bloom filter}. The latter, in turn, is closely related to \textit{Counting Bloom filters} \cite{FanEtAl00}. In this work, we adopt the definition of \cite{DBLP:conf/sigmod/CohenM03} but still call it Count-Min to be consistent with the name commonly adopted in the literature. A survey on Count-Min can be found e.g. in \cite{DBLP:reference/db/Cormode18}. 
	
	In this paper, we study a stronger version of Count-Min called \textit{conservative}. This modification of Count-Min was introduced in \cite{DBLP:conf/sigcomm/EstanV02} under the name \textit{conservative update}, see \cite{DBLP:reference/db/Cormode18}. It was also discussed in \cite{DBLP:conf/sigmod/CohenM03} under the name \textit{minimal increase}. Conservative Count-Min provides strictly tighter count estimates using the same memory and thus strictly outperforms the original version. The price to pay is the impossibility to deal with deletions (negative updates), whereas the original Count-Min can handle deletions as well, provided that the cumulative counts remain non-negative (condition known as {\em strict turnstile model} \cite{liu2011methods}). 
	
	Analysis of error of conservative Count-Min is a difficult problem having direct consequences on practical applications. Below in Section~\ref{sec:prior-work} we survey known related results in more details. In our previous work \cite{FusyKucherovCIAC23}, we approached this problem through the relationship with \textit{random hypergraphs}. We proved, in particular, that if the elements represented in the data structure are uniformly distributed in the input, the error follows two different regimes depending on the \textit{peelability} property of the underlying \textit{hash hypergraph}. While properties of random hypergraphs have been known to be crucially related to some data  structures (see Section~\ref{sec:hash}), this had not been known for Count-Min. 
	
	Starting out from these results, in this paper we extend and strengthen this analysis in several ways, providing experimental demonstrations in support of our claims. Our first goal is to provide a fine analysis of the ``anatomy'' of conservative Count-Min, describing its behavior in different regimes. Our main novel contribution is the demonstration that assigning different number of hash functions to different elements can significantly improve the error, and, as a consequence, lead to memory saving. Another major extension concerns the probability distribution of input elements: here we study non-uniform distributions as well, in particular step distribution and Zipf's distribution, and analyze the behavior of Count-Min for these distributions. This analysis is important not only because non-uniform distributions commonly occur in practice, but also because this provides important insights 
	 for the major application of Count-Min: detection of most frequent elements (sometimes called  
	\textit{heavy hitters} problem \cite{liu2011methods,charikar2004finding,cormode2008finding}). In particular, we consider the ``small memory regime'' (\textit{supercritical}, in our terminology) when the number of distinct represented elements is considerably larger than the size of the data structure, and analyse conditions under which most frequent elements are evaluated with negligible error. This has direct applications to the frequent elements problem. 
	
	To conclude the introduction, we note that the experimental character of our analysis does not restrict the generality of our results that hold for a wide range of parameters. This follows from the general nature of tested hypotheses, as well as from theoretical justifications based on previous works. 
	
	\section{Background and related work}
	
	\subsection{Conservative Count-Min: definitions}
	\label{countmin-and-CU}
	A Count-Min sketch is a counter array $A$ of size $n$ together with a set of hash functions mapping elements (keys) of a given universe $U$ to $[1..n]$. In this work, each element $e\in U$ can in general be assigned a different number $k_e$ of hash functions. Hash functions are assumed fully random, therefore we assume w.l.o.g. that an element $e$ is assigned hash functions $h_1,\ldots,h_{k_e}$. 
	
	At initialization, counters $A[i]$ are set to $0$. When processing an insertion of an input element $e$, basic Count-Min  increments by $1$ each counter $A[h_i(e)]$, $1\leq i\leq k_e$. The conservative version of Count-Min increments by $1$  only the smallest of all $A[h_i(e)]$. That is, $A[h_i(e)]$ is incremented by $1$ if and only if $A[h_i(e)]=\min_{1\leq j\leq k_e}\{A[h_j(e)]\}$ and is left unchanged otherwise. 
	
	In both versions, the \textit{estimate} of the number of occurrences of a queried element $e$ is computed by $c(e)=\min_{1\leq i\leq k_e}\{A[h_i(e)]\}$. It is easily seen that for any input sequence of elements, the estimate computed by original Count-Min is greater than or equal to the one computed by the conservative version. 
This follows from the observation that on the same input, an entry of counter array $A$ under conservative update can never get larger than the same entry under Count-Min. 
	
	In this work, we study the conservative version of Count-Min. Let $H$ denote a selection of hash functions $H=\{h_1,h_2,\ldots\}$. Consider an input sequence $I$ of $N$ insertions and let $E$ be the set of distinct elements appearing   
	in $I$. The \textit{relative error} of an element $e$ is defined by $\err_{H,I}(e) = (c(e) - occ(e))/occ(e)$, where $occ(e)$ is the number of occurrences of $e$ in the input. The \textit{combined error} is an average error over all elements in $I$ weighted by the number of occurrences, i.e. 
	$$\err_{H,I} = \frac{1}{N}\sum_{e\in E}occ(e) \cdot\err(e)=\frac{1}{N}\sum_{e\in E}(c(e) - occ(e)).$$
	
	We assume that $I$ is an i.i.d. random sequence drawn from a probability distribution on a set of elements $E\subseteq U$.  A key parameter is the size of $E$ relative to the size $n$ of $A$. By analogy to hash tables, $\lambda = |E|/n$ is called the \textit{load factor}, or simply the \textit{load}. 
	
	\subsection{Analysis of conservative Count-Min: prior works} 
	\label{sec:prior-work}
	Motivated by applications to traffic monitoring, \cite{DBLP:conf/teletraffic/BianchiDLS12} was probably the first work devoted to the analysis of conservative Count-Min in the counting setting. Their model assumed that all $\binom{n}{k}$ counter combinations are equally likely, where $k$ hash functions are applied to each element. This implies the regime when $|E| \gg n$. The focus of \cite{DBLP:conf/teletraffic/BianchiDLS12} was on the analysis of the \textit{growth rate} of counters, i.e. the average number of counter increments per insertion, using a technique based on Markov chains and differential equations. 
	%
		%
	Another approach proposed in \cite{DBLP:conf/iccnc/EinzigerF15}      
	simulates a conservative Count-Min sketch  by a hierarchy of ordinary Bloom filters. 
	Obtained error bounds are not explicit but are 
	expressed via a recursive relation based on false positive rates of corresponding Bloom filters. 

	Recent works \cite{benmazziane:hal-03613957,benmazziane22b} propose an analytical approach for computing error bounds depending on element probabilities assumed independent but not necessarily uniform, in particular leading to improved precision bounds  for detecting heavy hitters.  
	However the efficiency of this technique is more limited when all element probabilities are small.    
	In particular, if the input distribution is uniform,
	their approach does not bring out any improvement over the general bounds known for original Count-Min.  
	
	In our recent work \cite{FusyKucherovCIAC23}, we proposed an analysis of conservative Count-Min based on its relationship with {random hypergraphs}. We summarize the main results of this work below in Section~\ref{sec:peelability}.
	
	\subsection{Hash hypergraph} 
	\label{sec:hash}
		Many hashing-based data structures are naturally associated with hash hypergraphs so that hypergraph properties are directly related to the proper functioning of the data structure. This is the case with Cuckoo hashing \cite{pagh2004cuckoo} and Cuckoo filters \cite{10.1145/2674005.2674994}, Minimal Perfect Hash Functions and Static Functions \cite{majewski1996family}, Invertible Bloom Lookup Tables \cite{goodrich2011invertible}, and some others. \cite{DBLP:phd/dnb/Walzer20} provides an extended study of relationships between hash hypergraphs and some of those data structures. 
		
	A Count-Min sketch 
	is associated with a \textit{hash hypergraph}  $H=(V,E)$ where $V=\{1..n\}$
	and $E=\{\{h_1(e),...h_{k_e}(e)\}\}$ over all distinct input elements $e$. We use notation $\cH_{n,m}$ for hypergraphs with $n$ vertices and $m$ edges, and $\cH_{n,m}^{k}$ for $k$-uniform such hypergraphs, where all edges have cardinality $k$. In the latter case, since our hash functions are assumed fully random, a hash hypergraph is a $k$-uniform Erd\H{o}s-R\'enyi random hypergraph. 
	
	As inserted elements are assumed to be drawn from a random distribution, it is convenient to look at the functioning of a Count-Min sketch as a stochastic process on the associated hash hypergraph 
	\cite{FusyKucherovCIAC23}.  Each vertex holds a counter initially set to zero, and therefore each edge is associated with a set of counters held by the corresponding vertices. Inserting an element consists in incrementing the minimal counters of the corresponding edge, and retrieving the estimate of an element returns the minimum value among the counters of the corresponding edge. From now on in our presentation, we will interchangeably speak of distinct elements and edges of the associated hash hypergraph, as well as of counters and vertices. 
	Thus, we will call the \textit{vertex value} the value of the corresponding counter, and the \textit{edge value} the estimate of the corresponding element. Also, we will speak about the \textit{load} of a hypergraph understood as the density $|E|/|V|$. 
	
	\subsection{Hypergraph peelability and phase transition of error} 
	\label{sec:peelability}
	A hypergraph $H=(V,E)$ is called \textit{peelable} if iterating the following step starting from $H$ results in the empty graph: if the graph has a vertex of degree $1$ or $0$, delete this vertex together with the incident edge (if any). Like many other properties of random hypergraphs, peelability undergoes a phase transition. Consider the  Erd\H{o}s-R\'enyi $k$-uniform hypergraph model where graphs are drawn from $\cH_{n,m}^{k}$ uniformly at random. 
	It is shown in~\cite{molloy2005cores} that a phase transition occurs at a (computable) peelability threshold $\lambda_k$:  
	a random graph from $\cH_{n,\lambda n}^{k}$ is with high probability (w.h.p.) peelable if $\lambda < \lambda_k$, and w.h.p. non-peelable if $\lambda>\lambda_k$.    
	The first values are $\lambda_2=0.5$, $\lambda_3\approx 0.818$, $\lambda_4\approx 0.772$, etc., $\lambda_3$ being the largest. Note that the case  $k=2$ makes an exception to peelability: for $\lambda < \lambda_2$, a negligible fraction of vertices remain after peeling. 
	
	Peelability is known to be directly relevant to certain constructions of Minimal Perfect Hash Functions  \cite{majewski1996family} as well as to the proper functioning of Invertible Bloom filters \cite{goodrich2011invertible}. In \cite{FusyKucherovCIAC23}, we proved 
	that it is relevant to Count-Min as well. 
	\begin{theorem}[\cite{FusyKucherovCIAC23}]\label{CIAC-subcritical}
		Consider a  conservative Count-Min where each element is hashed using $k$ random hash functions. Assume that the input $I$ of length $N$ is drawn from a {\em uniform} distribution on a set $E\subseteq U$ of elements and let $\lambda = |E|/n$, where $n$ is the number of counters. If $\lambda < \lambda_k$, then for a randomly chosen element $e$, the error $(c(e)-occ(e))/occ(e)$ is $o(1)$ w.h.p. when both $n$ and $N/n$ grow. 
	\end{theorem}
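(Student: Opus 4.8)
The plan is to transfer the problem to the associated hash hypergraph $\cH_{n,\lambda n}^k$ and exploit its peelability. Since $\lambda < \lambda_k$, the result of \cite{molloy2005cores} quoted above guarantees that this random hypergraph is peelable w.h.p. as $n$ grows, so I may condition on a peeling order and argue by induction on the round at which each edge is removed. The base case is an edge $e$ owning a \emph{private} vertex $v$, i.e.\ a degree-$1$ vertex belonging to no other edge. Then $A[v]$ is touched only by insertions of $e$, and a short invariant shows that $v$ remains a minimizer of $e$ throughout: all counters start at $0$, every insertion of $e$ finds $A[v]$ at the minimum and increments it, and this never lets another vertex of $e$ drop below $A[v]$. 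Consequently $A[v]=c(e)=occ(e)$ \emph{exactly}, with no error — and this step needs neither uniformity nor the limit $N/n\to\infty$.

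For the inductive step, consider an edge $e$ removed at round $r$ through a vertex $v$ whose only surviving incident edge at that round is $e$; every other edge $f$ through $v$ is removed at some round $<r$ and so, by induction, satisfies $c(f)=occ(f)(1+o(1))$. The engine of the argument is the deterministic inequality
\[
A[v]\ \le\ \max_{g\ni v}\, c(g),
\]
valid at the end of any conservative run: tracing the last increment of $A[v]$ back to the edge $g$ that caused it, that insertion left every vertex of $g$ at value at least $A[v]$, hence $c(g)\ge A[v]$ thereafter. Since $c(e)\le A[v]$ and this last increment is caused either by $e$ itself or by one of the earlier neighbors $f$, the estimate of $e$ is pinned down: in the neighbor case $c(e)\le A[v]\le c(f)=t(1+o(1))$ at once, where $t=N/|E|$; in the self case the only way $c(e)$ can exceed $occ(e)$ is through neighbors raising the minimal vertices of $e$. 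Here the limit $N/n\to\infty$ enters through \emph{uniformity}: each $occ(e)$ is a Binomial with mean $t$, so $t\to\infty$ forces $occ(e)=t(1+o(1))$ and $c(f)\approx t$ for the already-controlled neighbors, while peelability supplies the tree-like local structure that rules out the cyclic reinforcement seen otherwise (a triangle, for instance, drives a counter strictly above every true count).

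The main obstacle is precisely the control of this \emph{inflation} — bounding how far $c(e)$ can drift above $occ(e)$ because neighboring insertions lift its minimal vertices. I expect to handle it by combining the displayed inequality with the observation that a neighbor contributes to a counter only while that counter equals the neighbor's current minimum, which the induction keeps near $t$; on the hypertree neighborhood guaranteed by peelability this forces every relevant counter to stay at $t(1+o(1))$, whence $c(e)=occ(e)(1+o(1))$. Finally, the conclusion is only asserted for a \emph{random} element w.h.p., which lets me discard the vanishing exceptional fractions as $n\to\infty$: elements whose count deviates from $t$, whose local neighborhood is not a hypertree, or whose peeling round is atypically large all have probability $o(1)$ in the subcritical regime (with the familiar $k=2$ caveat about the small surviving core noted above). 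Assembling these pieces gives $\err_{H,I}(e)=(c(e)-occ(e))/occ(e)=o(1)$ w.h.p., as claimed.
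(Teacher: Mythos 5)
Your overall architecture is the right one and matches the source of this theorem (the paper imports it from \cite{FusyKucherovCIAC23}; the proof of Theorem~\ref{peel-mixed} reveals the intended structure: a lemma stating that a peelable edge whose peeling is preceded by $M_e=O(1)$ edges has error $o(1)$, combined with the fact that below $\lambda_k$ a random edge is w.h.p.\ peelable with bounded $M_e$). Your base case is correct: an edge with a private vertex $v$ satisfies $A[v]=\min_{w\in e}A[w]$ as an invariant, hence $c(e)=occ(e)$ exactly. The inequality $A[v]\le\max_{g\ni v}c(g)$ is also correct. The gap is in the inductive step, exactly where you flag ``the main obstacle'': in the self case the inequality collapses to $c(e)=A[v]$ and gives nothing, and your proposed repair --- that a neighbor $f$ increments $A[v]$ only while $A[v]$ equals $f$'s current minimum, ``which the induction keeps near $t$'' --- does not follow from your inductive hypothesis, which controls only the \emph{final} values $c(f)$ and $occ(f)$. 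The neighbor's running minimum is near $t$ only at the end of the stream; earlier it ranges over $[0,t]$, so this bookkeeping only yields $A[v]\le \max_f c(f)+occ(e)\approx 2t$, i.e.\ relative error $\Theta(1)$, not $o(1)$.

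This is not a presentational issue: the bound you are trying to prove is \emph{false} for a fixed hypergraph and an adversarially ordered input, so no argument using only the final counts can close it. Take $k=2$, $e=\{u,v\}$, $f=\{v,w\}$ with $w$ private to $f$ and $u$ shared with many other edges; $e$ is peeled at round $2$ via $v$, with $M_e=1$. Insert all $occ(f)=t$ copies of $f$ (driving $A[v]=A[w]=t$), then enough edges at $u$ to push $A[u]\ge 2t$, then all $t$ copies of $e$: each increments $A[v]$, giving $c(e)=A[v]=2t$ and relative error $1$. Hence the i.i.d.\ uniformity of the stream must enter through the \emph{temporal interleaving}, not merely through $occ(e)=t(1+o(1))$. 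Concretely, the inductive hypothesis has to be strengthened to a trajectory statement --- e.g.\ that for every already-peeled neighbor $f$ and every time $s$, the running minimum of $f$'s counters is $(1+o(1))\,s/|E|$ uniformly in $s$, established by concentration of the partial occurrence counts at all intermediate times --- after which your ``neighbors only push $A[v]$ while it sits at their current minimum'' observation does yield $A[v]=t(1+o(1))$. Until that strengthening is carried out (and its propagation along the $M_e=O(1)$ peeling ancestors is quantified), the central step of the proof is missing. The remaining probabilistic bookkeeping --- discarding the $o(1)$ fraction of non-peelable edges, atypical counts, and large peeling depth, with the $k=2$ caveat --- is fine.
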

	In the complementary regime $\lambda > \lambda_k$, we showed in \cite{FusyKucherovCIAC23}, under some additional assumptions, that $\err_{H,I}$ is $\Theta(1)$. Thus, the peelability threshold for random hash hypergraphs corresponds to phase transition in the error produced by conservative Count-Min for uniform distribution of input. We call regimes $\lambda < \lambda_k$ and $\lambda > \lambda_k$  \textit{subcritical} and \textit{supercritical}, respectively. 
	
	\subsection{Variable number of hash functions: mixed hypergraphs}\label{sec:variable} 	The best peelability threshold $\lambda_3\approx 0.818$ can be improved in at least two different ways. One way is to use a carefully defined class of hash functions which replace uniform sampling of $k$-edges  by a specific non-uniform sampling. Thus, \cite{DBLP:conf/esa/DietzfelbingerW19} showed that the peelability threshold can be increased to $\approx 0.918$ for $k=3$ and up to $\approx 0.999$ for larger $k$'s if a special class of hypergraphs is used. 
	
	Another somewhat surprising idea, that we apply in this paper, is to apply a different number of hash functions to differents elements, that is to consider non-uniform hypergraphs. 
	Following \cite{DBLP:conf/icalp/DietzfelbingerGMMPR10},  \cite{DBLP:conf/sofsem/Rink13} showed that non-uniform hypergraphs may have a larger peelability threshold than uniform ones. 
	More precisely, \cite{DBLP:conf/sofsem/Rink13} showed that \textit{mixed hypergraphs} with two types of edges of different cardinalities, each constituting a constant fraction of all edges, may have a larger peelability threshold: for example, hypergraphs with a fraction of $\approx 0.887$ of edges of cardinality $3$ and the remaining edges of cardinality $21$ have the peelability threshold $\approx 0.920$, larger than the best threshold $0.818$ achieved by uniform hypergraphs. 
	We adopt the notation of \cite{DBLP:conf/sofsem/Rink13} for mixed hypergraphs: by writing $k=(k_1,k_2)$ we express that the hypergraph contains edges of cardinality $k_1$ and $k_2$, and $k=(k_1,k_2;\alpha)$ specifies in addition that the fraction of $k_1$-edges is $\alpha$. 
	
	The idea of using different number of hash functions for different elements has also appeared in data structures design. \cite{DBLP:conf/isit/BruckGJ06} proposed \textit{weighted Bloom filters} which apply a different number of hash functions depending on the frequency with which elements are queried and on probabilities for elements to  belong to the set. 
	The idea is to assign more hash functions to elements with higher query frequency and to those with smaller occurrence probability.  
	It is shown that this leads to a reduced false positive probability, where the latter is defined to be weighted by query frequencies. This idea was further refined in \cite{DBLP:conf/dasfaa/WangJDZZ15}, and then further in \cite{DBLP:journals/corr/abs-2205-14894}, under the name \textit{Daisy Bloom filter}.

\section{Results}
\label{sec:results}
\subsection{Uniform distribution}
\label{sec:uniform}
We start with the case where input elements are uniformly distributed, i.e. edges of the associated hash hypergraph have equal probabilities to be processed for updates. 

\vspace*{-3mm}
\subsubsection{Subcritical regime}
\label{sec:uniform-subcritical}
Theorem~\ref{CIAC-subcritical} in conjunction with the results of Section~\ref{sec:variable}
leads to the assumption that using a different number of hash functions for different elements one could ``extend'' the regime of $o(1)$ error of Count-Min sketch, which can be made into a rigorous statement (for simplicity we only give it with two different edge cardinalities).
\begin{theorem}[Extends~Theorem~\ref{CIAC-subcritical} to mixed cardinalities]
	\label{peel-mixed}
Consider a conservative Count-Min with $n$ counters. Assume that the input of length $N$ is drawn from a uniform distribution on $E\subseteq U$ and let $\lambda<\lambda_k$. Assume further that elements of $E$ are hashed according to a mixed hypergraph model $k=(k_1,k_2;\alpha)$. 
Let $c_k$ be the peelability ratio associated to $k$. Then, when $\lambda < c_k$, the error rate of a randomly chosen key is $o(1)$ w.h.p., as both $n$ and $N/n$ grow. 
\end{theorem}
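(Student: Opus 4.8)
The plan is to reduce Theorem~\ref{peel-mixed} to Theorem~\ref{CIAC-subcritical} by following the same peeling-based argument, but now run on the mixed hash hypergraph instead of the $k$-uniform one. The conceptual content of Theorem~\ref{CIAC-subcritical} is that whenever the hash hypergraph is peelable, the counters stabilize to their ``correct'' values during insertion, so the estimate of a randomly chosen edge converges to its true occurrence count. Concretely, I would first recall the peeling process and observe that its analysis in \cite{FusyKucherovCIAC23} never used $k$-uniformity in an essential way: the key structural fact is that a peelable hypergraph admits an ordering of its edges such that each edge, when removed, contains a vertex of degree~$1$ that has appeared in no previously removed edge. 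Such a vertex acts as a ``private'' counter whose value equals the edge's occurrence count (up to lower-order fluctuations), which is precisely what drives the $o(1)$ error estimate. This argument is agnostic to whether edges have cardinality $k_1$ or $k_2$, so it transfers verbatim to the mixed model.

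First I would make precise the mixed random hypergraph $\cH_{n,\lambda n}$ drawn with $k=(k_1,k_2;\alpha)$: each of the $\lambda n$ edges is independently assigned cardinality $k_1$ with probability $\alpha$ and $k_2$ otherwise, and then its endpoints are chosen uniformly. Then I would invoke the peelability threshold result of \cite{DBLP:conf/sofsem/Rink13}, which gives the critical ratio $c_k$ for this mixed model: for $\lambda < c_k$ such a random hypergraph is w.h.p.\ peelable. This replaces the single-species threshold $\lambda_k$ of \cite{molloy2005cores} and is the only external ingredient beyond Theorem~\ref{CIAC-subcritical} that the mixed case requires.

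Next I would transport the probabilistic estimate of Theorem~\ref{CIAC-subcritical} to this setting. The cleanest route is to re-examine the proof of Theorem~\ref{CIAC-subcritical} and identify the two places where it touches the hypergraph model: (i) the existence w.h.p.\ of a complete peeling order, which I have now supplied via $c_k$; and (ii) the concentration argument showing that, conditioned on a fixed peelable hypergraph, a uniformly random edge is peeled ``early'' with probability $1-o(1)$ and its private vertex records the true count with relative error $o(1)$ as $N/n\to\infty$. Step (ii) is a statement about the local peeling geometry around a random edge and about the law of large numbers for the Poisson-like counts on private vertices; neither step distinguishes between the two edge cardinalities, so each goes through once I track the (bounded) constants $k_1,k_2$ rather than a single $k$.

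The main obstacle I anticipate is ensuring that the concentration in step (ii) holds \emph{uniformly} over the two edge types and in particular that the fraction $1-o(1)$ of edges removed before the bulk of the peeling still captures a random edge regardless of its cardinality. Because peeling of a mixed hypergraph can remove $k_1$-edges and $k_2$-edges at different rates, I would need to argue that a uniformly random edge — weighted by occurrence as in the definition of $\err_{H,I}$ — lands in the peelable core's complement with probability $o(1)$; under the uniform input distribution every edge carries the same expected number of occurrences $N/|E|$, so the occurrence-weighting and the uniform edge-weighting coincide, and the argument reduces to the unweighted peeling statement already controlled by $c_k$. Once this uniformity is in place, combining (i) and (ii) yields that the error of a randomly chosen key is $o(1)$ w.h.p.\ as both $n$ and $N/n$ grow, which is exactly the claim.
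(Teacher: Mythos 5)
Your proposal is correct and follows essentially the same route as the paper: both isolate the deterministic peeling lemma of \cite{FusyKucherovCIAC23} (error of a peelable edge is $o(1)$, with rate controlled by how many edges must be peeled before it) as being agnostic to edge cardinalities, and then import from the mixed-hypergraph literature (\cite{DBLP:conf/sofsem/Rink13}, \cite{DBLP:conf/icalp/DietzfelbingerGMMPR10}) the fact that below $c_k$ a random edge is w.h.p.\ peelable with only $O(1)$ edges preceding it in the peeling order. Your ``peeled early with probability $1-o(1)$'' condition is exactly the paper's requirement that $M_e$ be $O(1)$ w.h.p.
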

\begin{proof}
	The crucial point is that the analysis of~\cite{FusyKucherovCIAC23} first establishes a deterministic result (Lemma~5 in the extended version) which can be stated as follows: the error rate of a peelable edge $e$ 
	in an arbitrary fixed hypergraph is $o(1)$, and the rate of convergence to $0$ is bounded in terms of the number $M_e$ of edges whose peeling has to precede the peeling of~$e$. 
	This ensures that, for any model of random hypergraphs, if a random edge $e$ in the hypergraph is peelable w.h.p. and if $M_e$ is $O(1)$ w.h.p., then the error rate
	of $e$ is $o(1)$. Then, the analysis performed in~\cite{DBLP:conf/icalp/DietzfelbingerGMMPR10} ensures that the model of random hypergraphs with mixed cardinalities has this property below the peelability threshold. 
\end{proof}

Figure~\ref{fig:uniform-subcritical} 
shows the average relative error as a function of the load factor for three types of hypergraphs: 2-uniform, 3-uniform and mixed hypergraph where a $0.885$ fraction of edges are of cardinality 3 and the remaining ones are of cardinality 14. 2-uniform and 3-uniform hypergraphs illustrate phase transitions at load factors approaching respectively $0.5$ and $\approx 0.818$, peelability thresholds for $2$-uniform and $3$-uniform hypergraphs respectively. It is clearly seen that the phase transition for the mixed hypergraphs occurs at a larger value approaching $\approx 0.898$ which is the peelability threshold for this class of hypergraphs \cite{DBLP:conf/sofsem/Rink13}. 
\begin{figure}[h]
	\begin{center}
		\includegraphics[width=0.6\textwidth]{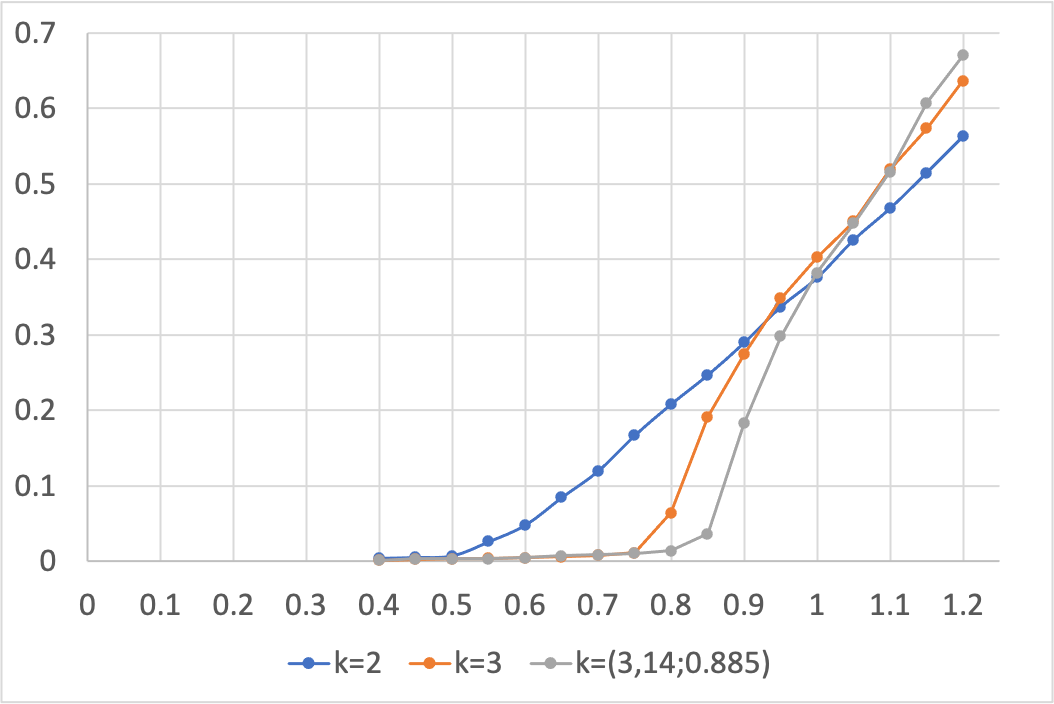}
	\end{center}
	\caption{$\err_{H,I}$ for small $\lambda=m/n$, for uniform distribution and different types of hypergraphs: 2-uniform, 3-uniform and (3,14)-mixed with a fraction of $0.885$ of 3-edges (parameters borrowed from \cite{DBLP:conf/sofsem/Rink13}). Data obtained for $n=1000$. The input size in each experiment is $5,000$ times the number of edges. Each average is taken over 10 random hypergraphs.}
	\label{fig:uniform-subcritical}
\end{figure}

While this result follows by combining results of \cite{DBLP:conf/sofsem/Rink13} and \cite{FusyKucherovCIAC23}, it has not been observed earlier and has an important practical consequence: \textit{using a varying number of hash functions in Count-Min sketch allows one to increase the load factor while keeping negligibly small error}. In particular, for the same input, this leads to space saving compared to the uniform case. 

Note that parameters $k=(3,14;0.885)$ are borrowed from \cite{DBLP:conf/sofsem/Rink13} in order to make sure that the phase transition corresponds to the peelability threshold obtained in \cite{DBLP:conf/sofsem/Rink13}. In practice, ``simpler'' parameters can be chosen, for example we found that $k=(2,5;0.5)$ produces essentially the same curve as $k=(3,14;0.885)$. 

\vspace*{-3mm}
\subsubsection{Supercritical regime}
\label{sec:uni-super} 
When the load factor becomes large (supercritical regime), the situation changes drastically. When the load factor just surpasses the threshold, some edges are still evaluated with small or zero error, whereas for the other edges, the error becomes large. This ``intermediate regime'' has been illustrated in \cite{FusyKucherovCIAC23}. Interestingly, edge values are distributed in this regime in a very peculiar way, concentrating around several values (see Figure~3 in \cite{FusyKucherovCIAC23} for illustration). These values must be explained by some graph structural patterns which remain to be elucidated. 

When the load factor goes even larger, the multi-level pattern of edge values disappears and all edge values become concentrated around the same value. 
We call this phenomenon \textit{saturation}. For example, for $k=3$ saturation occurs at around $\lambda = 6$ (data not shown). Under this regime, the hash hypergraph is dense enough so that its specific topology is likely to be irrelevant and the largest counter level ``percolates'' into all vertex counters. In other words, all counters grow at the same rate, without any of them ``lagging behind'' because of particular graph structural patterns (such as edges containing leaf vertices). 

Bianchi et al. \cite{DBLP:conf/teletraffic/BianchiDLS12} did their analysis under the assumption that each of $\binom{n}{k}$ edges is equally likely to be processed at each step. This emulates the situation where the load factor is very large and the hypergraph is saturated. The focus of \cite{DBLP:conf/teletraffic/BianchiDLS12} is on the \textit{growth rate} which is the expected number of counter increments when processing an edge. Obviously, this number varies between $1$ and $k$, and \cite{DBLP:conf/teletraffic/BianchiDLS12} establishes that larger values of $k$ imply larger growth rates. 
Note that the growth rate determines the slope of the linear dependence of $\err_{H,I}$ on $\lambda$. 

The case $k=1$ has not been considered by \cite{DBLP:conf/teletraffic/BianchiDLS12}. In this case, the growth rate is trivially $1$, as each insertion increments exactly one counter. Furthermore, $\err_{H,I}$ can be easily inferred in this case, as the error of a given key is defined by the number of keys hashed to the same counter. The number of such keys is approximated by the Poisson distribution with parameter $\frac{m}{n}=\lambda$, with expected value $\lambda$. Since keys are uniformly distributed, $\err_{H,I}=\lambda$ is a good estimator of the average relative error. 

However, it is non-trivial to see how this compares to the error for $k=2$ in the case of moderate load factors. Figure~\ref{fig:uniform-supercritical} clarifies the situation. It shows that $k=1$ produces larger estimates for load factors until a certain point before going below the estimate for $k=2$. This ``intermediate regime'' roughly corresponds to the $n\ln n$ coupon collector bound, i.e. to the regime where a significant number of counters remain zero. In this case, even if the total sum of counters is smaller for $k=1$ than for $k=2$, it is more evenly distributed for the latter case, similar to the well-known \textit{power of choice} phenomenon  in resource allocation \cite{DBLP:journals/siamcomp/AzarBKU99}.  Since empty counters are irrelevant for the average relative error, this results in a smaller error for $k=2$ compared to $k=1$. 
Note that for $k=1$ we observe that the error curve fits with the diagonal line. This can be explained by the fact that, for any given key $x$, the number of other keys falling in the same counter as $x$ (which determines the error in the estimate for $x$) is classically approximated  by a Poisson random variable of parameter $\lambda=m/n$, of expected value $\lambda$.  

\begin{figure}[h]
	\begin{center}
		\includegraphics[width=0.7\textwidth]{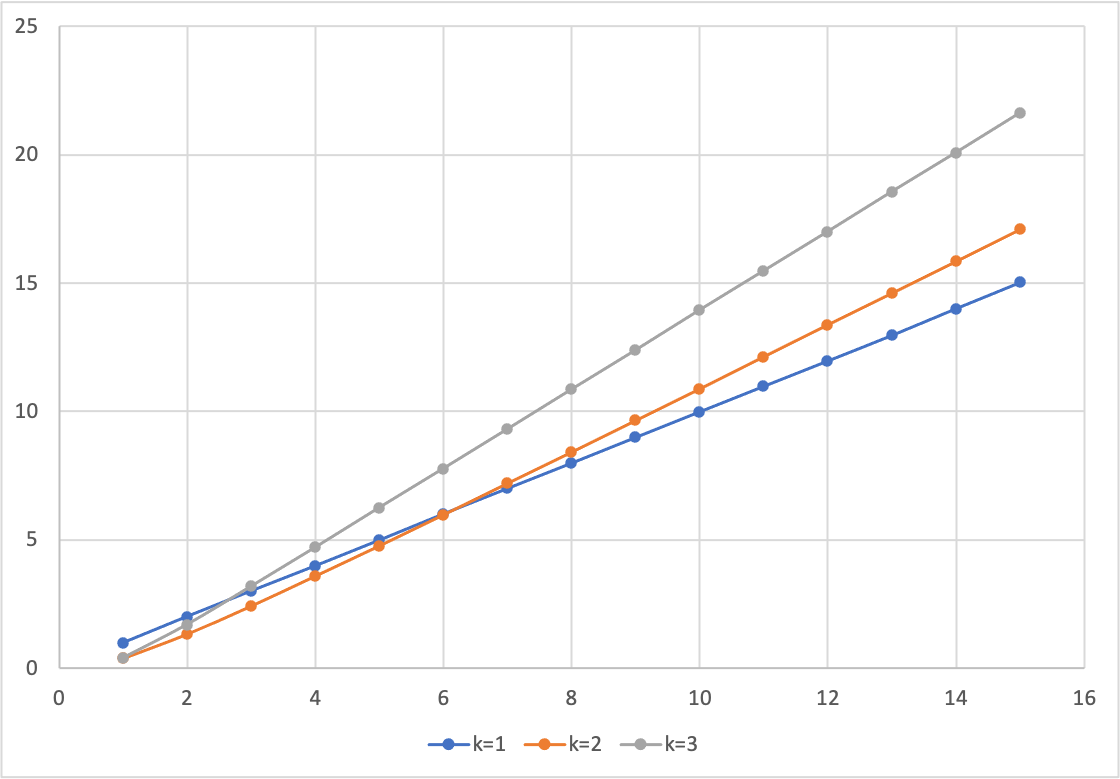}
	\end{center}
	\caption{$\err_{H,I}$ as a function of $\lambda=m/n$, for uniform distribution and supercritical regime. 
	}
	\label{fig:uniform-supercritical}
\end{figure}

Note that the configuration $k=1$ does not benefit from the advantage of the conservative update strategy over the regular Count-Min and has a limited interest from the practical viewpoint. It is however interesting to observe that $k=2$ still outperforms $k=1$ for limited~$\lambda$. 

\subsubsection{Mixed hypergraphs}
\label{sec:uniform-mixed}
We have just seen that for sufficiently large loads, the configuration $k=1$ results in the smallest error. Our next question is whether using a varying number of hash functions (mixed hypergraphs) can make the error even smaller, by analogy to the subcritical regime where it extends the range of load values supporting $o(1)$ error. At first glance, this question is not relevant, as the configuration $k=1$ has obviously the smallest possible sum of counters, since every insertion increments at least one counter and therefore $k=1$ seems to yield smallest possible estimates. 

This argument, however, applies to the regime when the hypergraph is saturated (all counters are hit), and $\lambda$ is large enough so that counters are concentrated. 
Note that as mentioned earlier,  counter values approximately follow a $\mathrm{Poisson}(k\lambda)$ distribution, which is concentrated around $k\lambda$ when $\lambda$ gets large. 
Perhaps surprisingly, it turns out that the error produced by $k=1$ can be made smaller for a large interval of $\lambda$ by using a varying number of hash functions, before $\lambda$ reaches the saturation point. 
Figure~\ref*{fig:uniform-supercritical-mixed} illustrates this phenomenon. 
It compares the error produced by the uniform case $k=1$ and a mixed configuration with $80\%$ of edges of cardinality $1$ and $20\%$ of edges of cardinality $3$. The latter produces a smaller error for values of $\lambda$ up to about $50$. Similar to the previous section, this is explained by the \textit{power of choice} effect: 3-edges ``smooth out'' counter values making the combined error smaller, 
 which in the not-yet-asymptotic regime $\lambda<50$ has a slightly more significant effect than 
the fact that the presence 3-edges makes the sum of counter values larger (as some insertions increment more than one counter). 

\begin{figure}[h]
	\centering
	\begin{subfigure}[b]{0.45\textwidth}
		\centering
		\includegraphics[scale=0.38]{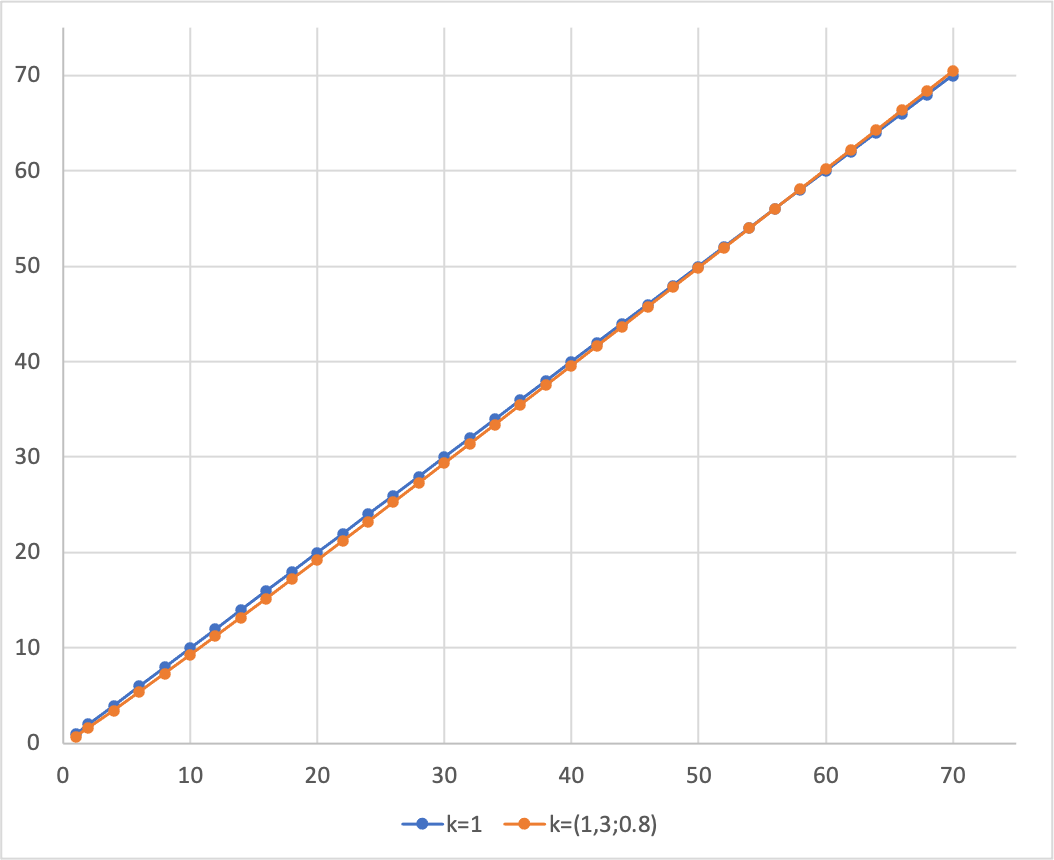}
	\end{subfigure}
	\hspace*{15pt}
	\begin{subfigure}[b]{0.45\textwidth}
		\centering
		\includegraphics[scale=0.38]{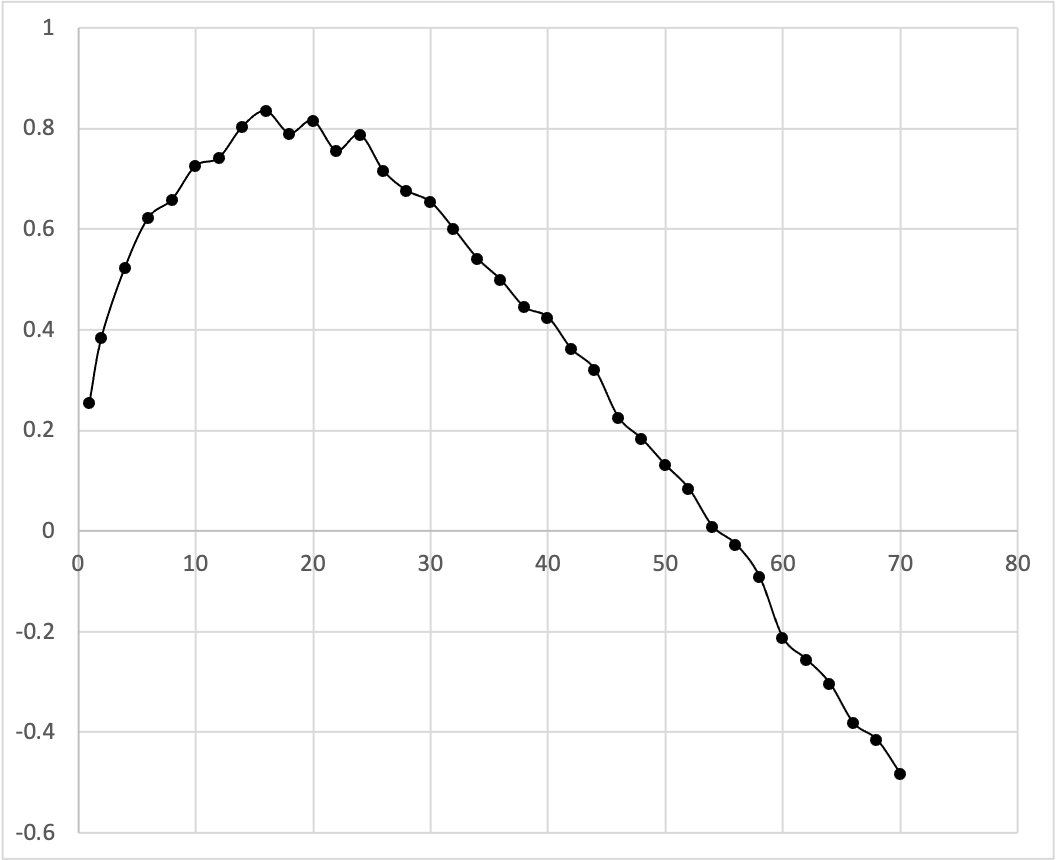}
	\end{subfigure}
	\caption{$\err_{H,I}$ as a function of $\lambda$, for uniform distribution and supercritical regime. Uniform configuration $k=1$ (blue curve) vs. mixed configuration $k=(1,3;0.8)$ (orange curve). 
	For clarity, the difference between the former and the latter is shown in the right plot. }
	\label{fig:uniform-supercritical-mixed}
\end{figure}

\subsection{Step distribution}

The analysis of the behavior of conservative Count-Min under uniform distribution of input elements shows that in the supercritical regime, the error made by the sketch grows linearly with the load factor. 
This implies a limited practical utility of the sketch in this regime. 
On the other hand, in many practical situations, input elements are not occurring with the same frequency. This motivates the application of Count-Min to non-uniform distributions and, in particular, to detection and analysis of frequent elements in the input stream. One popular problem here is computation of \textit{heavy hitters}, where Count-Min sketch have been previously applied  \cite{cormode_improved_2005}. 

In this section, we focus on the simplest non-uniform distribution -- \textit{step distribution} -- in order to examine the behavior of Count-Min sketch in presence of elements with different frequencies. Our model is as follows. We assume that input elements are classified into two groups that we call \textit{hot} and \textit{cold}, where a hot element has a larger appearance probability than a cold one. Note that we assume that we have a prior knowledge on whether a given element belongs to hot or cold ones. This setting is similar to the one studied for Bloom filters augmented with prior membership and query probabilities \cite{DBLP:journals/corr/abs-2205-14894}. Note that our definition of $\err_{H,I}$ assumes that the query probability of an element and its appearance probability in the input are equal. 

We assume that the load factors of hot and cold elements are $\lambda_h$ and $\lambda_c$ respectively. That is, there are $\lambda_h n$ hot and $\lambda_c n$ cold edges in the hash hypergraph. 
$G>1$, called \textit{gap factor}, denotes the ratio between probabilities of a hot and a cold element respectively. Let $p_h$ (resp. $p_c$) denote the probability for an input element to be hot (resp. cold). Then 
$p_h/p_c=G\lambda_h/\lambda_c$, and since $p_h+p_c=1$, we have 
$$p_h=\frac{G\lambda_h}{\lambda_c+G\lambda_h}, ~~ p_c=\frac{\lambda_c}{\lambda_c+G\lambda_h}.$$
For example, if there are 10 times more distinct cold elements than hot ones ($\lambda_h/\lambda_c=0.1$) but each hot element is 10 times more frequent than a cold one ($G=10$), than we have about the same fraction of hot and cold elements in the input ($p_h=p_c=0.5$). 

In the rest of this section, we will be interested in the combined error of hot elements alone, denoted $\errhot_{H,I}$. If $E_h\subseteq E$ is the subset of hot elements, and $N_h$ is the total number of occurrences of hot elements in the input, then $\errhot_{H,I}$ is defined by
$$\errhot_{H,I} = \frac{1}{N_h}\sum_{e\in E_h}occ(e) \cdot\err(e)=\frac{1}{N_h}\sum_{e\in E_h}(c(e) - occ(e)).$$

\subsubsection{``Interaction'' of hot and cold elements}

A partition of elements into hot and cold induces the partition of the underlying hash hypergraph into two subgraphs that we call \textit{hot} and \textit{cold subgraphs} respectively. Since hot elements have larger counts, one might speculate that counters associated with hot edges are larger than counts of cold elements and therefore are not incremented by those. Then, $\errhot_{H,I}$ is entirely defined by the hot subgraph, considered under the uniform distribution of elements. In particular, $\errhot_{H,I}$ as a function of $\lambda_h$ should behave the same way as $\err_{H,I}$  for the uniform distribution (see Section~\ref{sec:uniform}). 

This conjecture, however, is not true in general. One reason is that there is a positive probability that all nodes of a cold edge are incident to hot edges as well. As a consequence, ``hot counters'' (i.e. those incident to hot edges) gain an additional increment due to cold edges, and the latter contribute to the overestimate of hot edge counts. 
Figure~\ref{fig:G20} illustrates this point. It shows, for $k=3$, $\errhot_{H,I}$ as a function of $\lambda_h$ in presence of cold elements with $\lambda_c=5$, for the gap value $G=20$. For the purpose of comparison, the orange curve shows the error for the uniform distribution (as in Figure~\ref{fig:uniform-subcritical}), that is the error that hot elements would have if cold elements were not there. We clearly observe the contribution of cold elements to the error, even in the load interval below the peelability threshold. 

\begin{figure}[h]
	\centering
	\begin{subfigure}[b]{0.45\textwidth}
		\centering
		\includegraphics[scale=0.38]{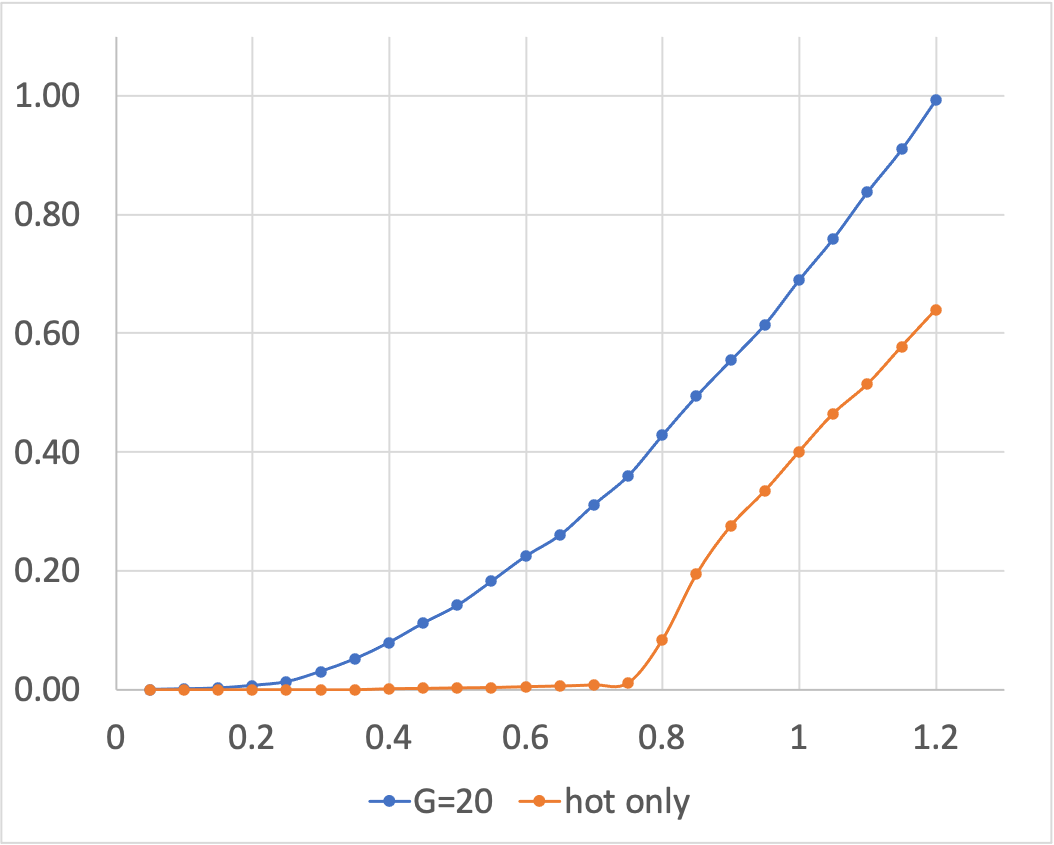}
		\caption{$G=20$}
		\label{fig:G20}
	\end{subfigure}
\hspace*{10pt}
	\begin{subfigure}[b]{0.45\textwidth}
		\centering
		\includegraphics[scale=0.38]{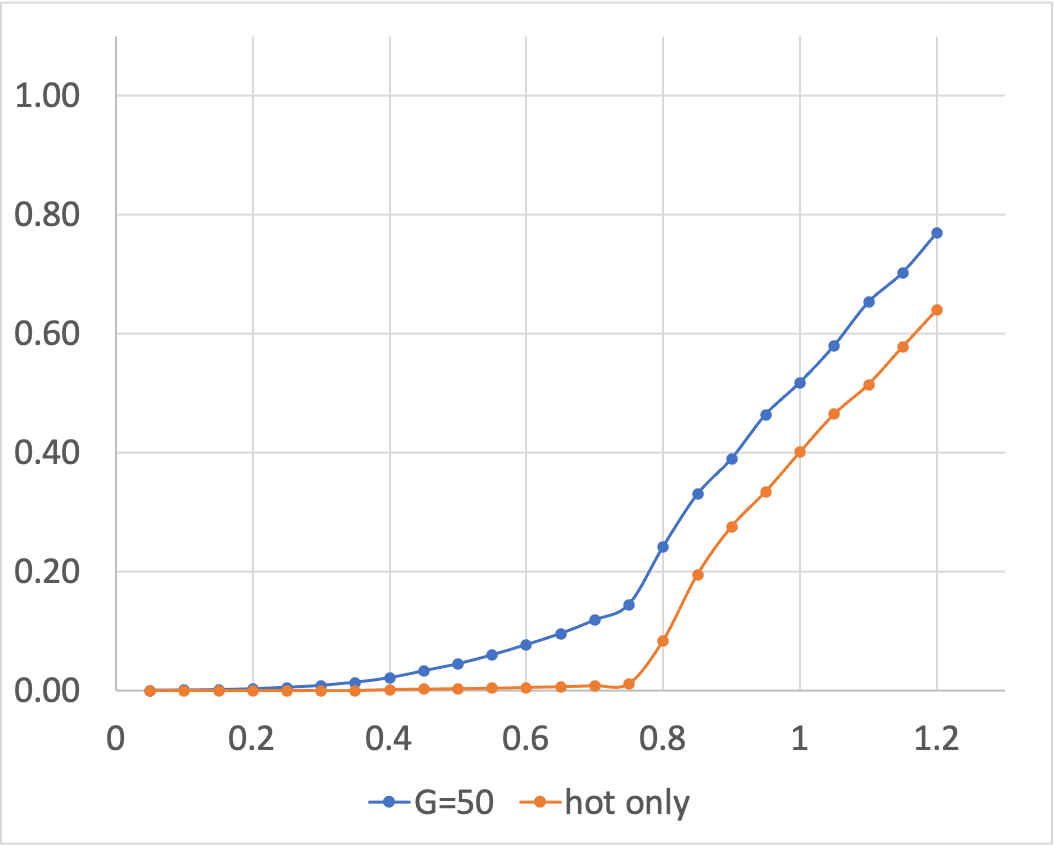}
		\caption{$G=50$}
		\label{fig:G50}
	\end{subfigure}\\
	\begin{subfigure}[b]{0.45\textwidth}
	\centering
	\includegraphics[scale=0.38]{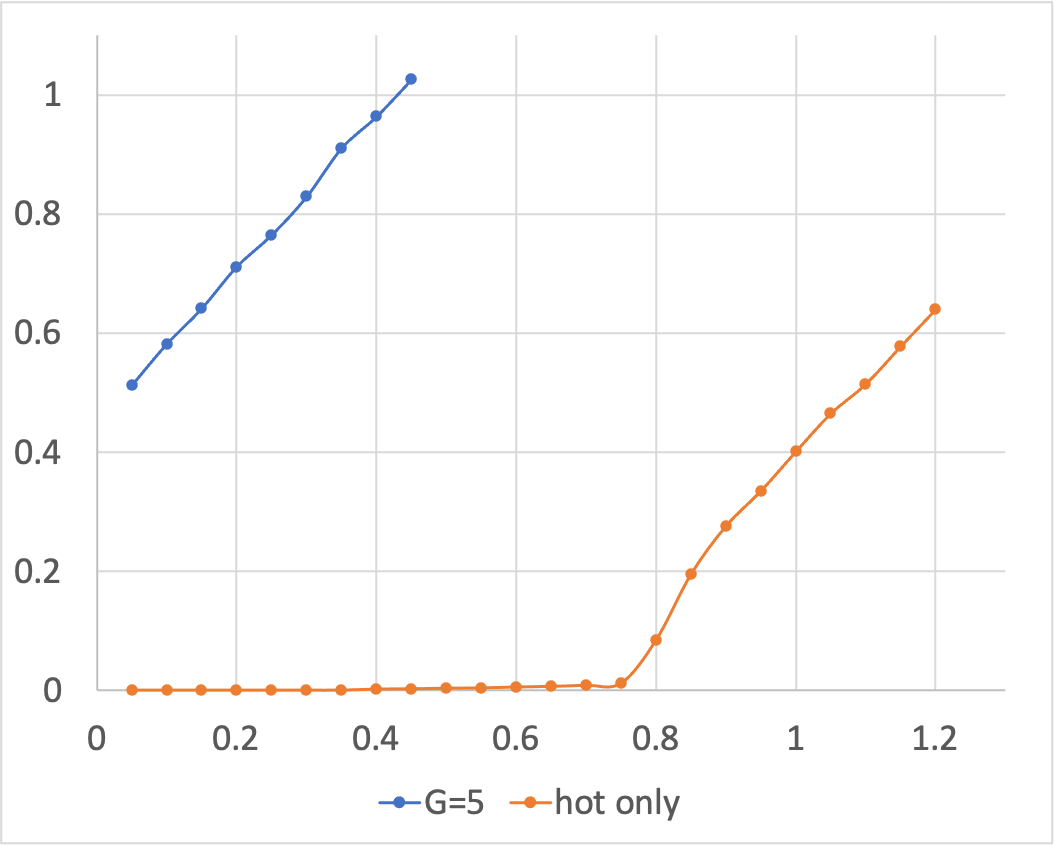}
	\caption{$G=5$}
	\label{fig:G5}
\end{subfigure}
	\caption{$\errhot_{H,I}$ for $k=3$ depending on $\lambda_h$, in presence of cold elements with $\lambda_c=5$ (blue curves) and without any cold elements (orange curve).}
	\label{fig:step-hot-error}
\end{figure}

Figure~\ref{fig:G50} illustrates that when the gap becomes larger (here, $G=50$),
the contribution of cold elements diminishes and the curve approaches the one of the uniform distribution. A larger gap leads to larger values of hot elements and, as a consequence, to a smaller relative impact of cold ones. 

Another reason for which the above conjecture may not hold is the following: even if the number of hot elements is very small but the gap factor is not large enough, the cold edges may cause the counters to become large if $\lambda_c$ is large enough, in particular in the saturation regime described in Section~\ref{sec:uni-super}. As a consequence, the ``background level'' of counters created by cold edges may be larger than true counts of hot edges, causing their overestimates. As an example, consider again the configuration  of Figure~\ref{fig:step-hot-error} 
with $k=3$ and $\lambda_c=5$. Figure~\ref{fig:uniform-supercritical} shows that the cold elements taken alone would have an error of about 6 on average ($\approx 6.25$, to be precise, data not shown) which means an about $7\times$ overestimate. Since the graph is saturated in this regime (see Section~\ref{sec:uni-super}), this means that most of the counters will be about 7 times larger than counts of cold edges. Now, if a hot element is only $5$ times more frequent than a cold one, those will be about $1.4\times$ overestimated, i.e. will have an error of about 0.4. 
Referring to Figure~\ref{fig:step-hot-error}, this means that the blue curve ``starts off'' at a positive error value in this case, instead of zero. 
This situation is illustrated in Figure~\ref{fig:G5}.

\subsubsection{Mixed hypergraphs}
The analysis above shows that in presence of a ``background'' formed by large number of cold elements, the error of hot elements starts growing for much smaller load factors than without cold elements, even if the latter are much less frequent than the former. Inspired by results of Section~\ref{sec:uniform-subcritical}, one may ask if the interval of negligible error can be extended by employing the idea of variable number of hash functions. Note that here this idea applies more naturally by assigning a different number of hash functions to hot and cold elements. 

\begin{figure}[h]
	\begin{center}
		\includegraphics[width=0.5\textwidth]{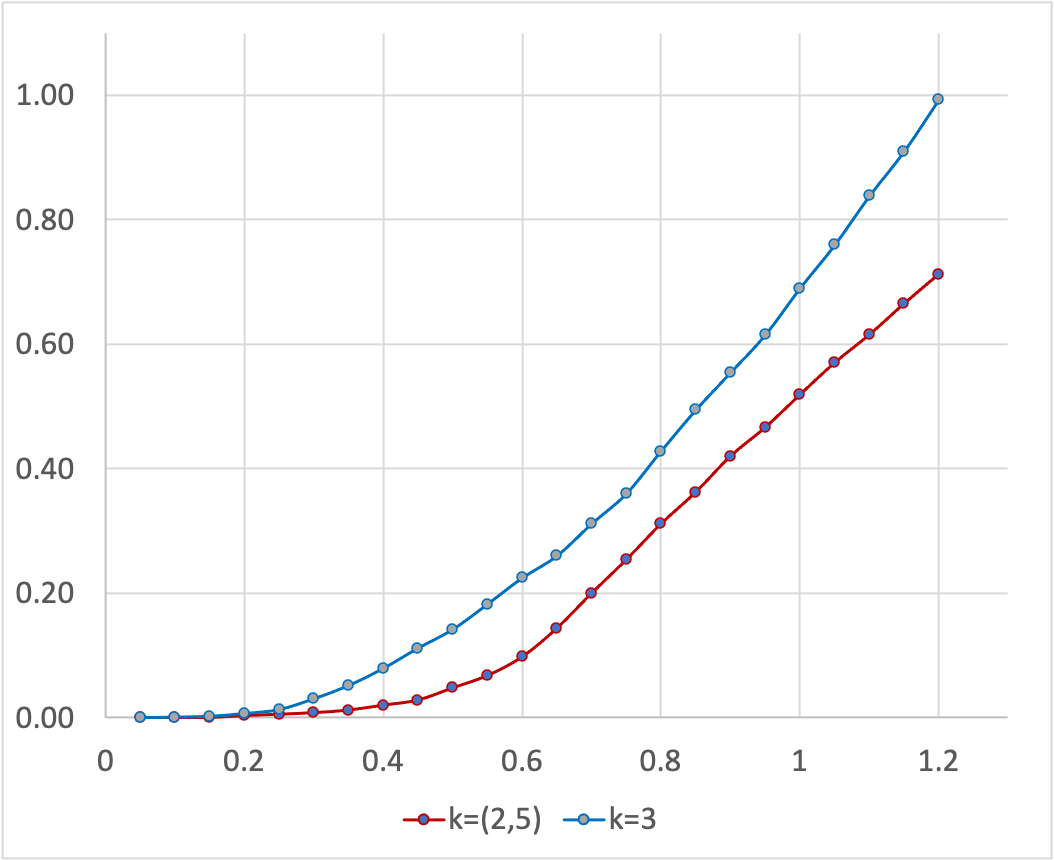}
	\end{center}
	\caption{$\errhot_{H,I}$ as a function of $\lambda_h$ for $k=3$, $\lambda_c=5$ and $G=20$ (blue curve, same as in Figure~\ref{fig:G20}) vs. $k=(2,5)$ for hot and cold elements respectively (red curve)
	}
	\label{fig:step-subcritical-mixed}
\end{figure}

Figure~\ref*{fig:step-subcritical-mixed} illustrates that this is indeed possible by assigning a smaller number of hash functions to hot elements and a larger number to cold ones. It is clearly seen that the interval supporting close-to-zero errors is extended. This happens because when the hot subgraph is not too dense, increasing the cardinality of cold edges leads to a higher probability that  at least one of the vertices of such an edge is not incident to a hot edge. As a consequence, this element does not affect the error of hot edges. For the same reason, decreasing the cardinality of hot edges (here, from 3 to 2) improves the error, as this increases the fraction of vertices non-incident to hot edges.

\subsubsection{Saturation in supercritical regime}
\label{sec:satur}
In Section~\ref{sec:uni-super} we discussed the saturation regime occurring for large load values: when the load grows sufficiently large, i.e. the hash hypergraph becomes sufficiently dense, all counters reach the same level, erasing distinctions between edges. In this regime, assuming a fixed load (graph density) and the uniform distribution of input, the edge value depends only on input size and not on the graph structure (with high probability). It is in this context that Bianchi et al. \cite{DBLP:conf/teletraffic/BianchiDLS12} studied the growth rate of edge values depending on input size. 

It is an interesting, natural and practically important question whether this saturation phenomenon holds for non-uniform distributions as well, as it is directly related to the capacity of distinguishing elements of different frequency. A full and precise answer to this question is not within the scope of this work. We believe that the answer is positive at least when the distribution is piecewise uniform, when edges are partitioned into several classes and are equiprobable within each class, provided that each class takes a linear fraction of all elements. Here we illustrate this thesis with the step distribution. 


\begin{figure}[h]
	\centering
	\begin{subfigure}[b]{0.30\textwidth}
		\includegraphics[scale=0.22]{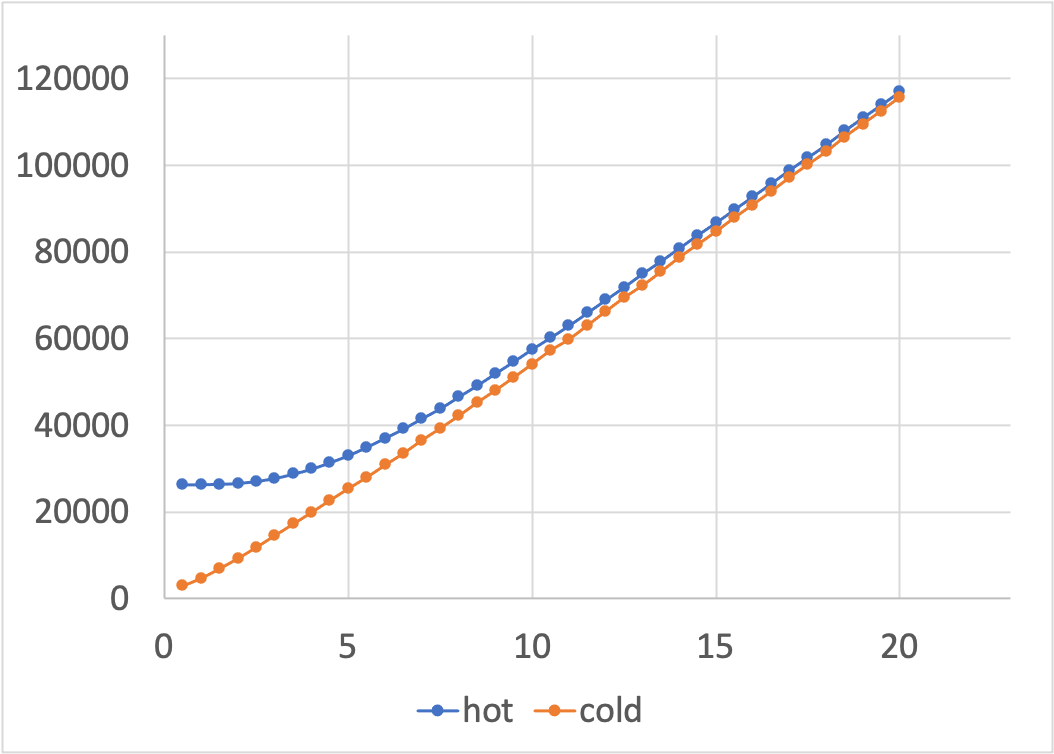}
		\caption{$2$-uniform}
		\label{fig:k2}
	\end{subfigure}
	~~
	\begin{subfigure}[b]{0.30\textwidth}
		\includegraphics[scale=0.22]{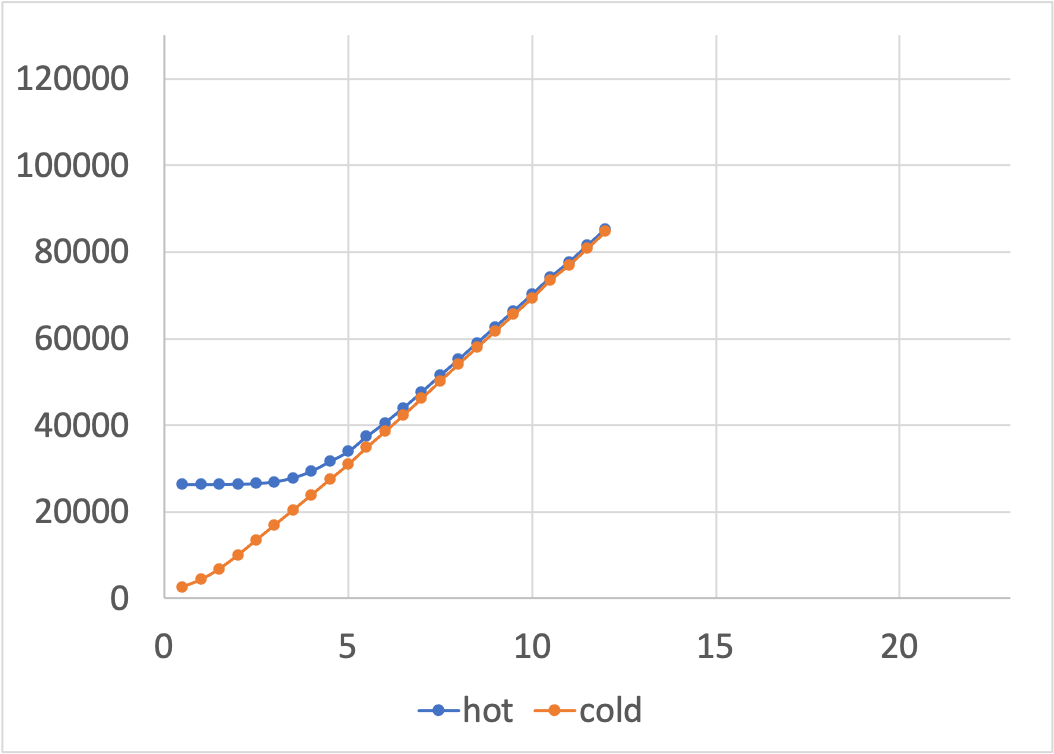}
		\caption{$3$-uniform}
		\label{fig:k3}
	\end{subfigure}
	~~
	\begin{subfigure}[b]{0.30\textwidth}
		\includegraphics[scale=0.22]{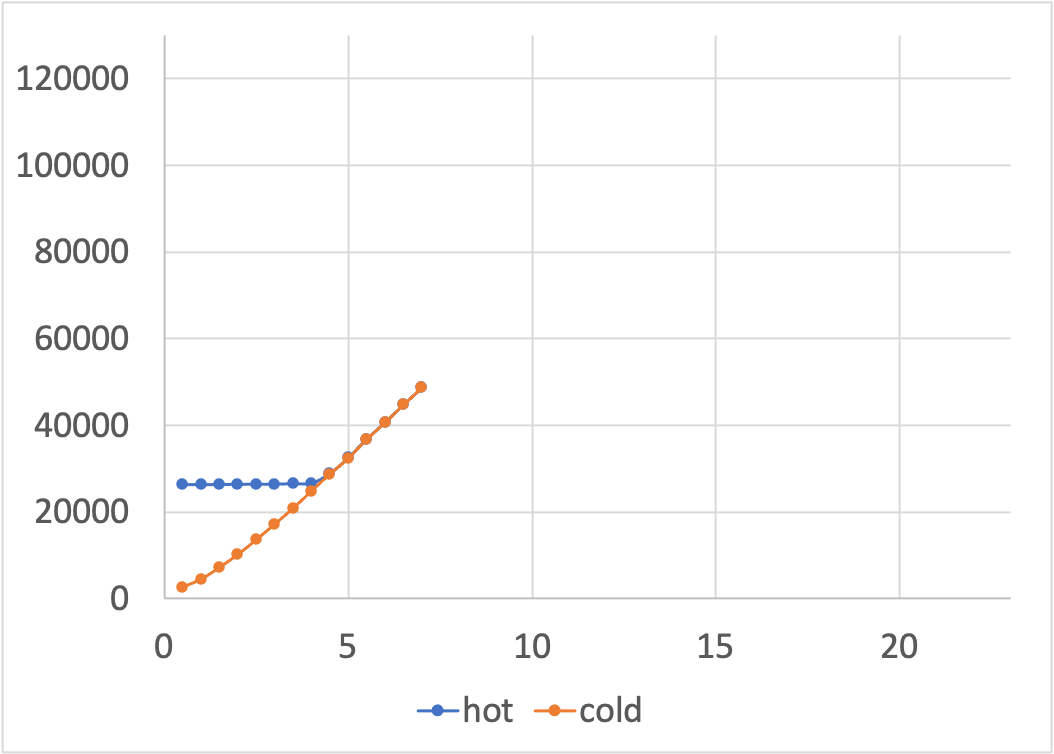}
		\caption{$(2,5)$-mixed}
		\label{fig:k2-5}
	\end{subfigure}
	\caption{Convergence of average estimates of hot and cold elements for 2-uniform (\ref{fig:k2}),  2-uniform (\ref{fig:k3}) and (2,5)-mixed (\ref{fig:k2-5}) hypergraphs. $x$-axis shows the total load $\lambda=\lambda_h + \lambda_c$ with $\lambda_h=0.1\cdot \lambda$ and $\lambda_c=0.9\cdot \lambda$ and $G=10$ in all cases. All experiences were run for $n=1000$ and the input stream of length $5,000$ times the number of edges (that is, $5\cdot 10^6\cdot \lambda$). Each average is taken over 10 random hypergraphs.}
	\label{fig:convergence}
\end{figure}

Figure~\ref{fig:convergence} illustrates the saturation phenomenon by showing average values of hot and cold edges ($G=10$) with three different configurations: 2-uniform, 3-uniform, and (2,5)-mixed. Note that the $x$-axis shows here the total load $\lambda=\lambda_h + \lambda_c$, where $\lambda_h=0.1 \cdot\lambda$ and $\lambda_c=0.9\cdot \lambda$. That is, the number of both hot and cold edges grows linearly when the total number of edges grows. 

One can observe that in all configurations, values of hot and cold edges converge, which is a demonstration of the saturation phenomenon. 
Interestingly,  the ``convergence speed'' heavily depends on the configuration: the convergence is ``slower'' for uniform configurations, whereas in the mixed configuration, it occurs right after the small error regime for hot edges.

\subsection{Zipf's distribution}
\label{sec:zipf}
Power law distributions are omnipresent in practical applications. The simplest of those is Zipf's distribution which is often used as a test case for different algorithms including Count-Min sketches \cite{CormodeMuthuICDM05,DBLP:conf/teletraffic/BianchiDLS12,DBLP:conf/iccnc/EinzigerF15,10.1145/3487552.3487856,benmazziane22b}. 
We also mention a recent learning-based variant of CountMin~\cite{hsu2019learning} (learning heavy hitters) and its study under a Zipfian distribution~\cite{aamand2019learned,du2021putting}. 
Under Zipf's distribution, element probabilities in descending order are proportional to $1/i^\beta$, where $i$ is the rank of the key and $\beta\geq 0$ is the \textit{skewness} parameter. Note that for $\beta = 0$, Zipf's distribution reduces to the uniform one. 

%
%

Zipf's distribution is an important test case for our study as well, as it forces several (few) most frequent elements to have very large counts and a large number of elements (\textit{heavy tail}) to have small counts whose values decrease only polynomially on the element rank and are therefore of the same order of magnitude. Bianchi et al. \cite[Fig.~1]{DBLP:conf/teletraffic/BianchiDLS12} observed that for Zipf's distribution in the supercritical regime, the estimates follow the ``waterfall-type behavior'': the most frequent elements have essentially exact estimates whereas the other elements have all about the same estimate regardless of their frequency. Figure~\ref{fig:zipf-waterfall} illustrates this phenomenon for different skewness values. 

\begin{figure}[h]
	\centering
	\begin{subfigure}[b]{0.32\textwidth}
		\centering
		\includegraphics[scale=0.25]{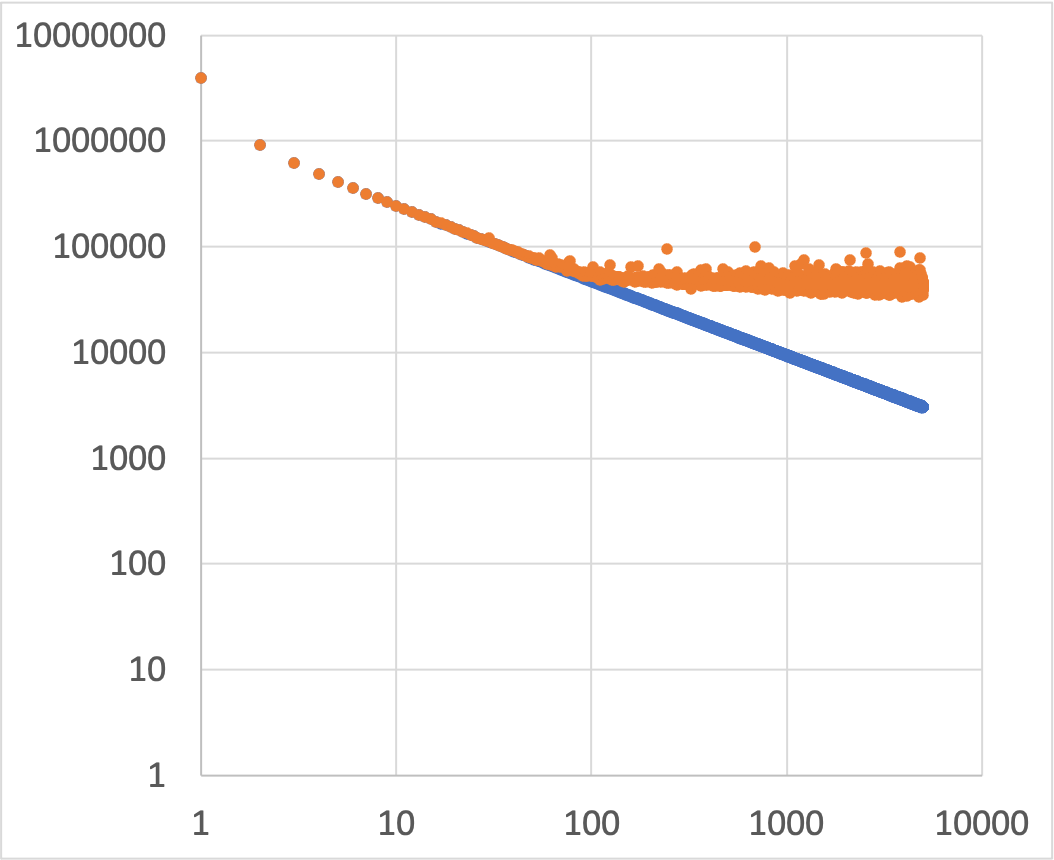}
		\caption{$\beta=0.7$}
		\label{fig:zipf-beta07}
	\end{subfigure}
	\begin{subfigure}[b]{0.32\textwidth}
		\centering
		\includegraphics[scale=0.25]{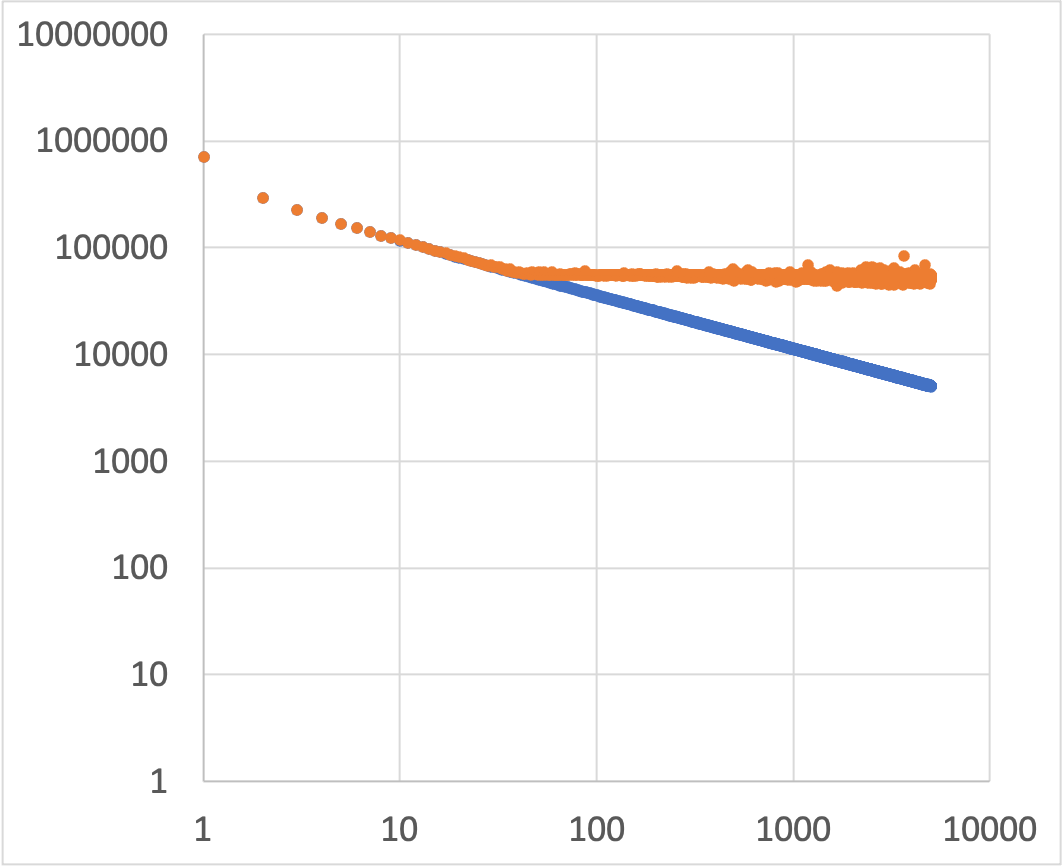}
		\caption{$\beta=0.5$}
		\label{fig:zipf-beta05}
	\end{subfigure}
	\begin{subfigure}[b]{0.32\textwidth}
		\centering
		\includegraphics[scale=0.25]{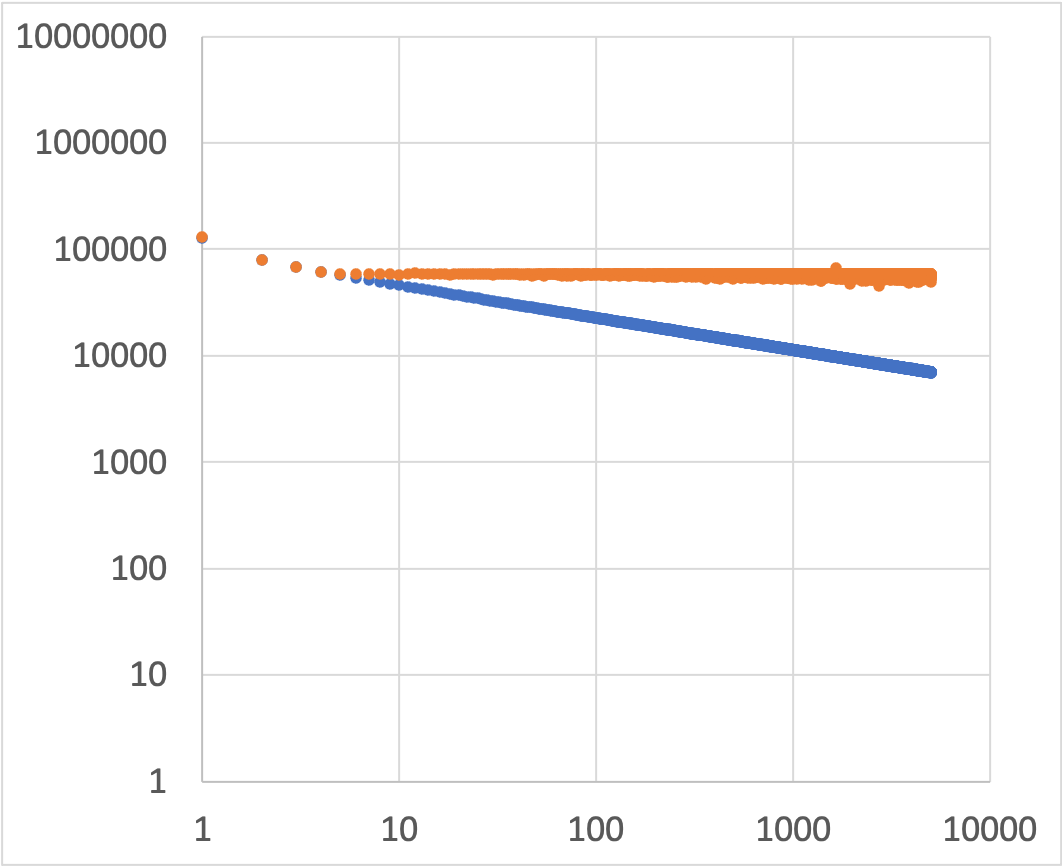}
		\caption{$\beta=0.3$}
		\label{fig:zipf-beta03}
	\end{subfigure}
	\caption{Exact (blue) and estimated (orange) edge values for Zipf's distribution as a function on the element frequency rank, plotted in double log scale. All plots obtained for $n=1000$, $\lambda=5$, $k=2$, and the input size $50\cdot 10^6$. Estimates are averaged over 10 hash function draws.}
	\label{fig:zipf-waterfall}
\end{figure}

The waterfall-type behavior for Zipf's distribution is well explained by the analysis we developed in the previous sections. The ``waterfall pool level'' of values (called \textit{error floor} in \cite{DBLP:conf/teletraffic/BianchiDLS12}) is the effect of saturation formed by heavy tail elements. The few ``exceptionally frequent'' elements are too few to affect the saturation level (their number is $\ll n$), they turn out to constitute ``peaks'' above the level and are thus estimated without error. 
Naturally, smaller skewness values make the distribution less steep and reduce the number of ``exceptionally frequent'' elements. For example, according to Figure~\ref{fig:zipf-waterfall}, for $\lambda=5$ and $k=2$, about 50 most frequent elements are evaluated without error for $\beta=0.7$, about 40 for $\beta=0.5$ and only 5 for $\beta=0.3$. 

\begin{figure}[h]
	\centering
	\begin{subfigure}[b]{0.45\textwidth}
		\centering
		\includegraphics[scale=0.25]{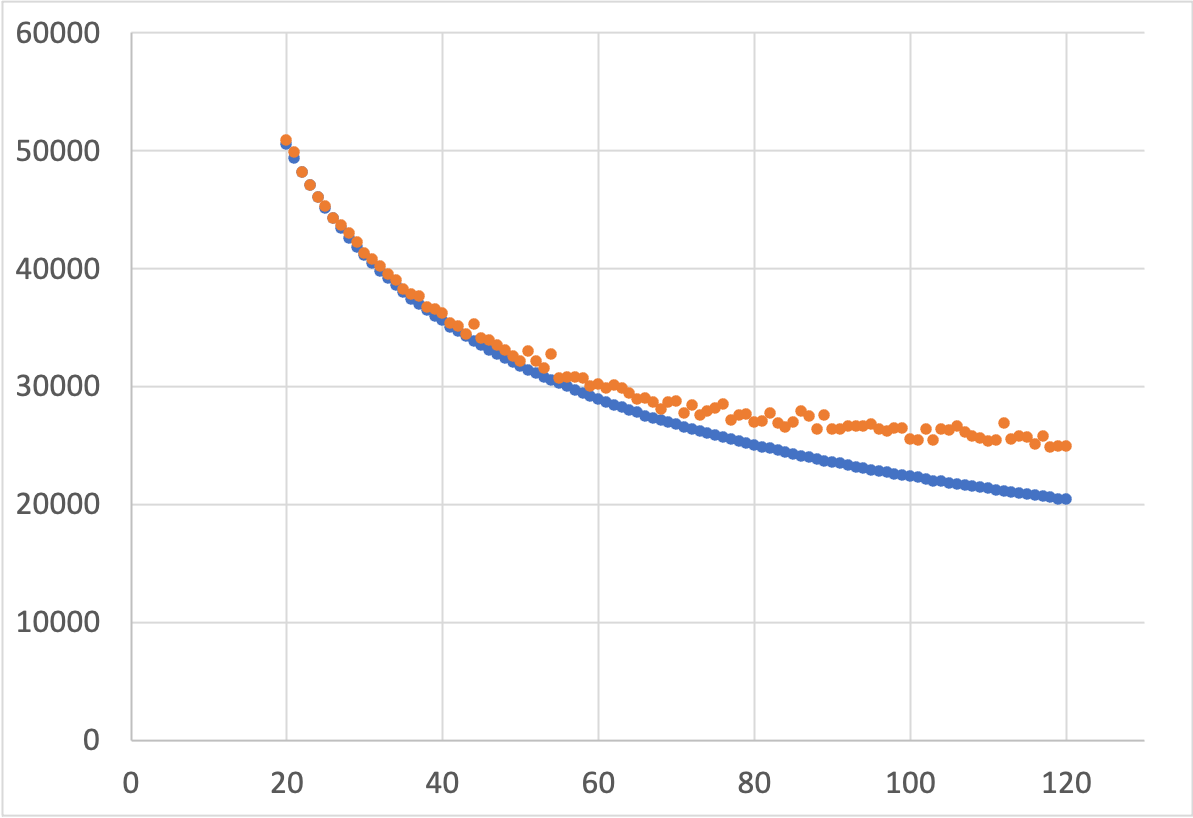}
		\caption{$k=2$}
		\label{fig:zipf-zoom-k2}
	\end{subfigure}
	~~
	\begin{subfigure}[b]{0.45\textwidth}
		\centering
		\includegraphics[scale=0.25]{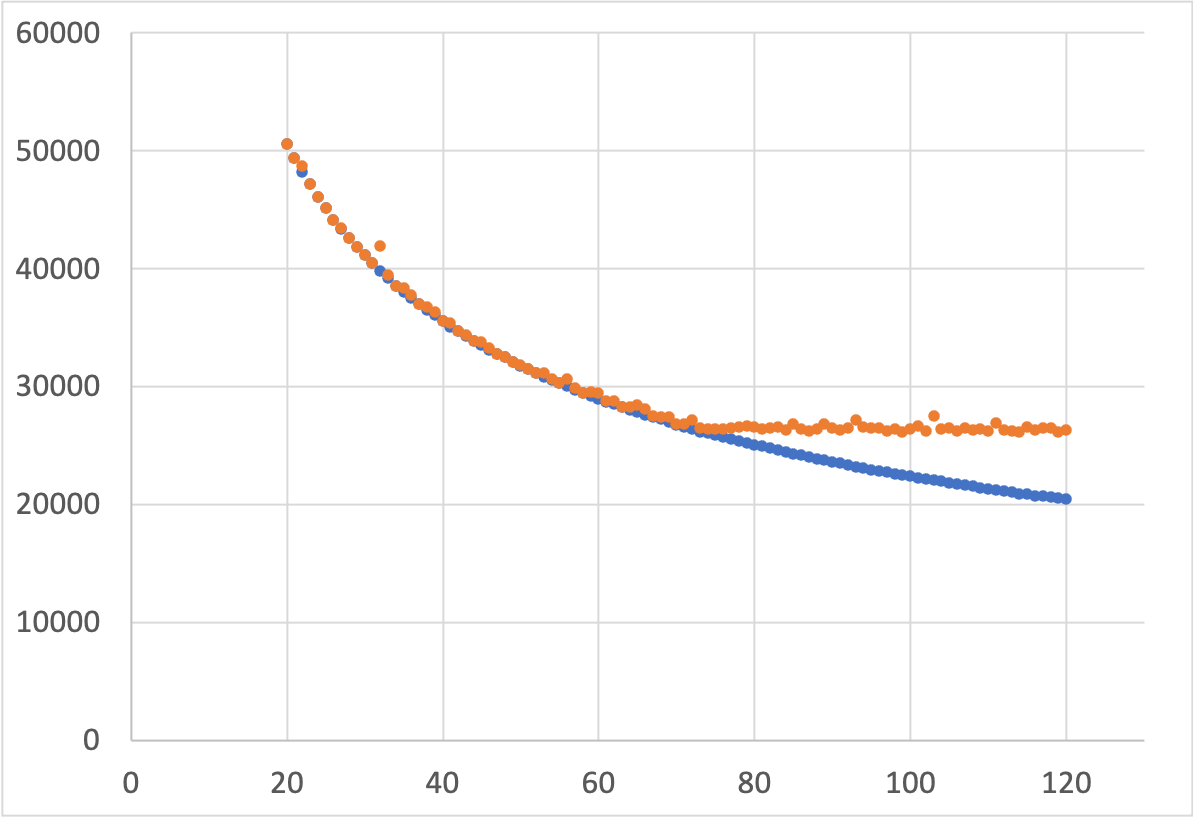}
		\caption{$k=(2,5;0.2)$}
		\label{fig:zipf-zoom-k2-5}
	\end{subfigure}
	\caption{Exact (blue) and estimated (orange) edge values for Zipf's distribution as a function on the element frequency rank, for $n=1000$, $\lambda=2$, and input size $20\cdot 10^6$. Values for ranks 20 to 120 only are shown. Estimates are averaged over 50 hash function draws.}
	\label{fig:zipf-waterfall-zoom}
\end{figure}

Following our results from previous sections, we studied whether using a varying number of hash functions can extend the range of frequent elements estimated with small error. We found that for moderate loads $\lambda$, this is possible indeed. Figure~\ref{fig:zipf-waterfall-zoom}  illustrates this for $\lambda=2$. It shows a ``zoom'' around the ``break point'' (see Figure~\ref{fig:zipf-waterfall}) for $k=2$ vs. $k=(2,5;0.2)$. For clarity, plots are shown in regular scale and for elements of rank 20 to 120 only. The Figure demonstrates that the case $k=(2,5;0.2)$ provides a sharper break (see also Figure~\ref{fig:convergence}). As a result, even if the saturation level is higher for $k=(2,5;0.2)$ than for $k=2$, about 70 most frequent elements are evaluated with small error with $k=(2,5;0.2)$, vs. about 50 with $k=2$. 

\section{Conclusions}

In this paper, we presented a series of experimental results providing new insights into the behavior of conservative Count-Min sketch. Some of them have direct applications to practical usage of this data structure. Main results can be summarized as follows. 
\begin{itemize}
	\item For the uniform distribution of input elements, assigning a different number of hash functions to different elements extends the subcritical regime (range of load factors $\lambda$) that supports asymptotically vanishing relative error. This immediately implies space saving for Count-Min configurations verifying this regime. For non-uniform distributions, varying number of hash functions allows extending the regime of negligible error for most frequent elements, 
	\item Under ``sufficiently uniform distributions'', including uniform and step distributions, a Count-Min sketch reaches a saturation regime when $\lambda$ becomes sufficiently large. In this regime, counters become concentrated around the same value and elements with different frequency become indistinguishable,
	\item Frequent elements that can be estimated with small error can be seen as those which surpass the saturation level formed by the majority of other elements. For example, in case of Zipf's distribution, those elements are a few ``exceptionally frequent elements'', whereas the saturation is insured by the heavy-tail elements.  Applying a varying number of hash functions can increase the number of those elements for moderate loads $\lambda$. 
\end{itemize}

Many of those results lack a precise mathematical analysis. Perhaps the most relevant to practical usage of Count-Min is the question of estimating the ``waterfall pool level'' for different distributions, 
which is a fundamental information for heavy-hitter type of applications. Indeed, in a ``very supercritical'' regime, estimates tend to converge to the same level, as it was illustrated in Section~\ref{sec:satur} for the step distribution. On the other hand, a few elements with ``abnormally high'' frequencies that exceed this level are those which are evaluated with small error. This is illustrated with Zipf's distribution (Section~\ref{sec:zipf}). 
Bianchi et al. \cite{DBLP:conf/teletraffic/BianchiDLS12} observed that in the case of non-uniform distributions of input elements, the ``waterfall pool level'' is upper-bounded by the saturation level of counter values when the input distribution is modeled by a uniform choice among all $\binom{n}{k}$ possible edges. 
This latter is computed in \cite{DBLP:conf/teletraffic/BianchiDLS12} using a method based on Markov chains and differential equations. We believe that this method can be extended to the case of mixed graphs as well and leave it for future work. 
However, computing the ``waterfall pool level'' for non-uniform distributions including Zipf's distribution is an open problem. 



\bibliography{biblio2.bib}

\end{document}